\documentclass[
preprint,
amsmath,amssymb,
aps,
]{revtex4-2}

\usepackage{graphicx}
\usepackage{dcolumn}
\usepackage{bm}
\usepackage{hyperref}

\usepackage{subcaption}

\usepackage{amsthm}

\usepackage{dcolumn}
\usepackage{bm}
\usepackage{hyperref}
\usepackage{subcaption}
\usepackage{braket}
\usepackage{caption}
\usepackage{diagbox}
\usepackage{booktabs}
\usepackage{adjustbox}
\usepackage{multirow}
\usepackage[mathlines]{lineno}
\usepackage{ragged2e}
\usepackage{pifont}
\usepackage{makecell}
\usepackage{algorithm}
\usepackage{algpseudocode}
\usepackage{soul}
\usepackage{cleveref}
\usepackage{tikz}
\usepackage{array}
\usepackage{centernot}
\usepackage{cancel}

\newtheorem{observation}{Observation}
\newtheorem{remark}{Remark}

\newtheorem{corollary}{Corollary}

\newtheorem{proposition}{Proposition}

\newcommand{\Tr}{\operatorname{tr}}
\newcommand{\tr}{\operatorname{tr}}

\newcommand{\W}{\mathcal{W}}
\newcommand{\eps}{\varepsilon}

\definecolor{codegreen}{rgb}{0,0.6,0}
\definecolor{codegray}{rgb}{0.5,0.5,0.5}
\definecolor{codepurple}{rgb}{0.58,0,0.82}
\definecolor{backcolour}{rgb}{0.95,0.95,0.92}

\begin{document}


\title{One-parameter counterexamples to the refined Bessis-Moussa-Villani conjecture}


\author{Hyunho Cha}
\email{ovalavo@snu.ac.kr}
\author{Jungwoo Lee}
\email{junglee@snu.ac.kr}
\affiliation{NextQuantum Innovation Research Center, Seoul National University, Seoul 08826, Republic of Korea}


\begin{abstract}
Positivity of matrix trace exponentials is a basic structural principle behind finite-temperature quantum statistical mechanics.
The Bessis-Moussa-Villani conjecture, a central manifestation of this principle, was proved by Stahl after an influential reformulation by Lieb and Seiringer. A later refinement asks whether the normalized average over all words with $n$ letters $A$ and $m$ letters $B$ is always bounded above by $\mathrm{tr}(A^nB^m)$ and below by $\mathrm{tr}\exp(n\log A+m\log B)$. In this work, we study a specific one-parameter family $(A_x, B_x)$ and show that the correct small-$x$ invariant of a word is not its degree of fragmentation, but a weighted shortest-bridge cost on its cyclic run decomposition. Our results yield a class of counterexamples to the suggested refinement. Remarkably, the ratio of the normalized word average to the trace $\mathrm{tr}(A^nB^m)$ can become arbitrarily large.
\end{abstract}

\maketitle


\section{Introduction}

The Bessis-Moussa-Villani (BMV) conjecture was introduced in 1975 in the setting of quantum statistical mechanics~\cite{bessis1975monotonic}. One convenient form of the conjecture concerns the trace exponential
$
\lambda \mapsto \Tr e^{A-\lambda B},
$
with $A$ Hermitian and $B\succeq 0$ \cite{mehta1976integral, le1980representation, moussa2000representation, fannes2003perturbation, drmota2005hyper, hansen2006trace, heinavaara2025tracial}. Lieb and Seiringer showed that the BMV conjecture is equivalent to the statement that every coefficient of the polynomial
$
\Tr(A+tB)^p
$
is nonnegative whenever $A,B\succeq 0$ and $p\in\mathbb{N}$~\cite{lieb2004equivalent, hillar2005positivity, johnson2005principal, hagele2007proof, hillar2007advances, friedland2008remarks, klep2008sums, landweber2009d, cafuta2010note, collins2010sum, fleischhack2010asymptotic, burgdorf2011sums, lieb2012further}. Stahl later proved the conjecture~\cite{stahl2013proof, eremenko2015herbert}. The problem is important in quantum theory because trace exponentials are finite-dimensional partition functions of Gibbs states, and the BMV positivity statement implies inequalities for derivatives of thermodynamic partition functions \cite{bessis1975monotonic,iqoqi_problem40}. Thus BMV lies at the intersection of operator inequalities, noncommutative positivity, and finite-temperature quantum many-body theory \cite{lieb1973convex, petz1994survey}.

A natural refinement, attributed to Daniel H\"agele and listed as an Open Quantum Problem, asks for more than mere nonnegativity of the average coefficient \cite{adnane2017refinement, iqoqi_problem40}. For positive semidefinite matrices $A,B$, define $p_{n,m}(A,B)$ by
\begin{equation}
\binom{n+m}{n}\,p_{n,m}(A,B)
=
[t^ns^m] \, \Tr(tA+sB)^{n+m},
\label{eq:defpnm}
\end{equation}
where $[t^ns^m]$ denotes coefficient extraction. Equivalently,
\[
p_{n,m}(A,B):=\frac{1}{\binom{n+m}{n}}\sum_{W\in\W_{n,m}}\tr W(A,B),
\]
where \(\W_{n,m}\) is the set of all linear words in the letters \(A,B\) containing \(n\) copies of \(A\) and \(m\) copies of \(B\).
The refinement asks whether
\begin{equation}
\Tr(A^nB^m)\ge p_{n,m}(A,B)\ge \Tr\exp(n\log A+m\log B)
\label{eq:refinement}
\end{equation}
always holds. When $A$ and $B$ commute, equality is immediate. When $n=m=1$, the lower bound reduces to the Golden--Thompson inequality \cite{golden1965lower, thompson1965inequality, lenard1971generalization, araki1973golden, thompson2014golden}. The intuition behind Eq.~\eqref{eq:refinement} is that clustering equal letters should increase the trace, while increasingly fragmented products should push the trace downward \cite{johnson2002eigenvalues, hillar2005positive}. We show that even the averaged inequality on the left of Eq.~\eqref{eq:refinement} fails.

In this work, we consider a positive semidefinite family $(A_x,B_x)$ and isolate the structural mechanism behind this counterexample. The key point is that the family admits a rank-one projection normal form. Once this is written down, the small-\(x\) asymptotics of every word become a combinatorial optimization problem on its run decomposition. That optimization problem has no monotone dependence on the number of runs, and this is precisely why the clustering (or fragmentation) intuition breaks down. Our findings not only confirm that the inequality is subject to violation, but also reveal that the ratio $p_{n,m}(A_x,B_x)/\Tr(A_x^n B_x^m)$ can grow without bound.

The main conceptual conclusions are the following.
\begin{enumerate}
\item The counterexample sits near a \emph{commuting} limit, so the blow-up is not caused by large noncommutativity.
\item Every coefficient that appears in a word trace is nonnegative, so the mechanism is not cancellation.
\item The correct invariant is a weighted shortest-bridge cost, not the crude number of alternations.
\item The averaged trace is dominated by a very small family of exceptional words.
\end{enumerate}

\section{Projection normal form}

For $x\ge 0$, set
$$
A_x=\begin{pmatrix}
1&0&0\\
0&x&-x\\
0&-x&x
\end{pmatrix},
\qquad
B_x=\begin{pmatrix}
x&-x&0\\
-x&x&0\\
0&0&1
\end{pmatrix}.
$$
Set \(\eps:=2x\), and define the unit vectors
\begin{gather*}
p=e_1,
\qquad
u=\frac{(0,-1,1)}{\sqrt2},
\qquad
v=\frac{(1,-1,0)}{\sqrt2},
\qquad
q=e_3.
\end{gather*}
Let
$
P=|p\rangle\langle p|,
$
$
U=|u\rangle\langle u|,
$
$
V=|v\rangle\langle v|,
$
and
$
Q=|q\rangle\langle q|.
$
Explicitly,
\begin{gather*}
P=
\begin{pmatrix}
1&0&0\\0&0&0\\0&0&0
\end{pmatrix},
\qquad
U=\frac12
\begin{pmatrix}
0&0&0\\0&1&-1\\0&-1&1
\end{pmatrix},
\qquad
V=\frac12
\begin{pmatrix}
1&-1&0\\-1&1&0\\0&0&0
\end{pmatrix},
\qquad
Q=
\begin{pmatrix}
0&0&0\\0&0&0\\0&0&1
\end{pmatrix}.
\end{gather*}

\begin{observation}\label{obs:normal-form}
The matrices \(A_x,B_x\) admit the decomposition
$
A_x=P+\eps U
$
and
$
B_x=\eps V+Q.
$
Moreover, \(P,U\) are orthogonal rank-one projections, \(V,Q\) are orthogonal rank-one projections, and the only nonzero overlaps among \(P,U,V,Q\) are
\[
\langle p,v\rangle=\frac{1}{\sqrt2},
\qquad
\langle v,u\rangle=\frac12,
\qquad
\langle u,q\rangle=\frac{1}{\sqrt2},
\]
and their reverses. Equivalently, the compatibility graph is the path
\[
P\longleftrightarrow V\longleftrightarrow U\longleftrightarrow Q,
\]
and all other pairings vanish.
\end{observation}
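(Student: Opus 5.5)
The statement is a finite verification, and I will carry it out entirely with the explicit vectors $p,u,v,q$ and the matrices $P,U,V,Q$ written above. No earlier result is needed.

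\textbf{Step 1: the decompositions.} With $\eps=2x$ I have $\eps U = x\begin{pmatrix}0&0&0\\0&1&-1\\0&-1&1\end{pmatrix}$. Adding $P=e_1e_1^{\top}$ reproduces $A_x$ entry by entry. Likewise $\eps V = x\begin{pmatrix}1&-1&0\\-1&1&0\\0&0&0\end{pmatrix}$, and adding $Q=e_3e_3^{\top}$ gives $B_x$.

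\textbf{Step 2: the projections and their orthogonality.} Each of $p,u,v,q$ is a unit vector: $p$ and $q$ are standard basis vectors, and $\|u\|^2=\|v\|^2=(1+1)/2=1$. Hence $P,U,V,Q$ are rank-one orthogonal projections. The pairs are mutually orthogonal because $\langle p,u\rangle=0$ (the first coordinate of $u$ vanishes) and $\langle v,q\rangle=0$ (the third coordinate of $v$ vanishes). This gives $PU=UP=0$ and $VQ=QV=0$.

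\textbf{Step 3: the overlap table.} I list the six unordered inner products among $p,u,v,q$:
\[
\langle p,u\rangle=0,\quad \langle p,q\rangle=0,\quad \langle v,q\rangle=0,\quad
\langle p,v\rangle=\tfrac{1}{\sqrt2},\quad \langle v,u\rangle=\tfrac{(-1)(-1)}{2}=\tfrac12,\quad \langle u,q\rangle=\tfrac1{\sqrt2}.
\]
All vectors are real, so the reversed pairings have the same values. Since $XY=\langle x,y\rangle\,|x\rangle\langle y|$ for rank-one projections, a product of two distinct projections is zero exactly when the corresponding overlap vanishes. The nonzero pairs are $P$–$V$, $V$–$U$ and $U$–$Q$, which form the path $P\leftrightarrow V\leftrightarrow U\leftrightarrow Q$.

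There is no real obstacle here. The only place a mistake could occur is the sign bookkeeping in $\langle v,u\rangle$: it equals $+\tfrac12$ because the two $-1$ entries meet in the second coordinate, not $-\tfrac12$. I would double-check that value, since later word-trace coefficients depend on these signs being positive.
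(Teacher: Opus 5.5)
Your verification is correct and matches the paper's approach. The paper gives no separate proof of this observation; it relies on a direct check against the explicitly displayed vectors $p,u,v,q$ and matrices $P,U,V,Q$, and your entry-by-entry decomposition together with the six-pair overlap table (including the sign check $\langle v,u\rangle=+\tfrac12$) is exactly that check.
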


The clustered product \(A_x^nB_x^m\) is now completely transparent.

\begin{proposition}\label{prop:cluster-exact}
For every \(n,m\ge 1\),
\[
\tr(A_x^nB_x^m)=2^{m-1}x^m+2^{n-1}x^n+2^{n+m-2}x^{n+m}.
\]
In particular, for \(n=m=5\),
\[
\tr(A_x^5B_x^5)=32x^5+256x^{10}.
\]
\end{proposition}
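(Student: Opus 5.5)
Using Observation~\ref{obs:normal-form}, $A_x=P+\eps U$ with $P,U$ orthogonal rank-one projections, so $A_x^n=P+\eps^nU$ for $n\ge1$. Likewise $B_x^m=\eps^mV+Q$. The plan is to expand $\tr(A_x^nB_x^m)=\tr\big((P+\eps^nU)(\eps^mV+Q)\big)$ into four terms and evaluate each as a trace of a product of two rank-one projections, $\tr(|a\rangle\langle a|b\rangle\langle b|)=|\langle a,b\rangle|^2$.

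By the overlap table in Observation~\ref{obs:normal-form}:
\begin{itemize}
\item $\tr(PV)=|\langle p,v\rangle|^2=\tfrac12$,
\item $\tr(PQ)=0$,
\item $\tr(UV)=|\langle u,v\rangle|^2=\tfrac14$,
\item $\tr(UQ)=|\langle u,q\rangle|^2=\tfrac12$.
\end{itemize}
Hence
\[
\tr(A_x^nB_x^m)=\tfrac12\eps^m+\tfrac14\eps^{n+m}+\tfrac12\eps^n .
\]
Substituting $\eps=2x$ gives $2^{m-1}x^m+2^{n-1}x^n+2^{n+m-2}x^{n+m}$, as claimed.

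For $n=m=5$ the two degree-five terms combine to $16x^5+16x^5=32x^5$, and the last term is $2^8x^{10}=256x^{10}$.

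There is no real obstacle. The one point needing care is the idempotence step: it needs $n,m\ge1$ (so that $P^n=P$, $U^n=U$) and uses $PU=UP=0$, which holds because $P,U$ project onto orthogonal unit vectors. One should also double-check the stated overlaps directly from the coordinates: $\langle u,v\rangle=\tfrac{(0)(1)+(-1)(-1)+(1)(0)}{2}=\tfrac12$, and $\langle p,q\rangle=0$.
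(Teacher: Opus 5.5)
Your proposal is correct and follows the paper's proof. You both use idempotence and orthogonality to write \(A_x^n=P+\eps^nU\) and \(B_x^m=\eps^mV+Q\), expand the product with the \(PQ\) term vanishing, evaluate the remaining traces as squared overlaps \(\tfrac12,\tfrac12,\tfrac14\), and then substitute \(\eps=2x\).
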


\begin{proof}
Because \(P,U\) are orthogonal projections,
\[
A_x^n=(P+\eps U)^n=P+\eps^nU.
\]
Likewise,
\[
B_x^m=(\eps V+Q)^m=\eps^mV+Q.
\]
Therefore
\begin{align*}
A_x^nB_x^m & = (P+\eps^nU)(\eps^mV+Q)\\
& =\eps^mPV+\eps^nUQ+\eps^{n+m}UV,
\end{align*}
since \(PQ=0\). Taking traces and using
\begin{gather*}
\tr(PV)=|\langle p,v\rangle|^2=\frac12,
\qquad
\tr(UQ)=|\langle u,q\rangle|^2=\frac12,
\qquad
\tr(UV)=|\langle u,v\rangle|^2=\frac14,
\end{gather*}
we obtain
\[
\tr(A_x^nB_x^m)=\frac12\eps^m+\frac12\eps^n+\frac14\eps^{n+m}.
\]
Substituting \(\eps=2x\) gives the result.
\end{proof}

\begin{remark}\label{rem:boundary}
At \(x=0\) one has \(A_0=P\) and \(B_0=Q\). These commute, but \(PQ=0\). Hence every mixed word has trace zero at the limiting commuting pair. The small-\(x\) problem is therefore a competition of vanishing orders near a singular boundary point of the positive semidefinite cone.
\end{remark}

\section{Admissible projection walks and the leading exponent of a word}

Let \(W\) be a word containing both letters. Because trace is invariant under cyclic rotation, we may rotate \(W\) so that it starts with an \(A\)-run and write it in cyclic run form as
\[
W=A^{a_1}B^{b_1}A^{a_2}B^{b_2}\cdots A^{a_r}B^{b_r},
\qquad a_i,b_i\ge 1.
\]
Here \(r\) is the number of \(A\)-runs (equivalently, the number of \(B\)-runs).

\begin{observation}\label{obs:rank-one-trace}
If \(R_{\xi}=|\xi\rangle\langle\xi|\) is the rank-one projection onto a unit vector \(\xi\), then for any unit vectors \(\xi_1,\dots,\xi_k\),
\[
\tr\bigl(R_{\xi_1}R_{\xi_2}\cdots R_{\xi_k}\bigr)
=\prod_{j=1}^k\langle\xi_j,\xi_{j+1}\rangle,
\qquad \xi_{k+1}:=\xi_1.
\]
In particular, the trace is nonzero if and only if every adjacent overlap \(\langle\xi_j,\xi_{j+1}\rangle\) is nonzero.
\end{observation}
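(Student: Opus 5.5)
The plan is to expand each projection as a dyad, $R_{\xi_j}=|\xi_j\rangle\langle\xi_j|$, and use the fact that a product of dyads telescopes into a chain of scalar inner products. With the convention that $\langle\cdot,\cdot\rangle$ is linear in the second slot (matching the bra--ket ordering), the middle of the product is
\[
R_{\xi_1}R_{\xi_2}\cdots R_{\xi_k}
=|\xi_1\rangle\langle\xi_1|\xi_2\rangle\langle\xi_2|\xi_3\rangle\cdots\langle\xi_{k-1}|\xi_k\rangle\langle\xi_k|.
\]
Every interior pairing $\langle\xi_j|\xi_{j+1}\rangle$ is a scalar and can be pulled out. This leaves
\[
\Bigl(\prod_{j=1}^{k-1}\langle\xi_j,\xi_{j+1}\rangle\Bigr)\,|\xi_1\rangle\langle\xi_k|.
\]
One could equally argue by induction on $k$: the case $k=1$ is $R_{\xi_1}=|\xi_1\rangle\langle\xi_1|$, and the inductive step multiplies on the right by $|\xi_{k+1}\rangle\langle\xi_{k+1}|$, which produces one new scalar factor.

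The next step is to take the trace of the remaining rank-one operator using $\tr(|a\rangle\langle b|)=\langle b,a\rangle$. This gives the last factor $\langle\xi_k,\xi_1\rangle=\langle\xi_k,\xi_{k+1}\rangle$, which closes the cycle and yields the product formula stated in the observation. For $k=1$ the formula reduces to $\tr R_{\xi_1}=\langle\xi_1,\xi_1\rangle=1$, as it should since each $\xi_j$ is a unit vector.

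The ``in particular'' clause then follows at once, because a finite product of complex numbers is zero exactly when one of its factors is zero.

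There is no real obstacle here. The only point needing care is to fix the inner-product convention so that the cyclic product is written in the correct order, since in the complex case the factors $\langle\xi_j,\xi_{j+1}\rangle$ and $\langle\xi_{j+1},\xi_j\rangle$ are conjugates of each other. In the real setting of the paper this distinction disappears.
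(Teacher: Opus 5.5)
Your proof is correct, and it is the standard argument the paper relies on: the paper states this as an observation without proof, and collapsing the dyads to $\bigl(\prod_{j=1}^{k-1}\langle\xi_j,\xi_{j+1}\rangle\bigr)|\xi_1\rangle\langle\xi_k|$ and then applying $\tr(|a\rangle\langle b|)=\langle b,a\rangle$ is exactly what it takes for granted. You also handle the ``in particular'' clause and the inner-product convention correctly; the convention does not matter here, since every vector in the paper is real.
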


Using the projection decomposition run by run gives
\[
A_x^{a_i}=P+\eps^{a_i}U,
\qquad
B_x^{b_i}=\eps^{b_i}V+Q.
\]
Hence every word expands as a sum over choices of one projection from each run.

\begin{proposition}\label{prop:admissible-expansion}
Let
\[
W=A^{a_1}B^{b_1}\cdots A^{a_r}B^{b_r}.
\]
Then
\[
\tr W(A_x,B_x)
=\sum_{\substack{\sigma_i\in\{P,U\}\\ \tau_i\in\{V,Q\}}}
\eps^{\mathrm{wt}(\sigma,\tau)}
\tr(\sigma_1\tau_1\sigma_2\tau_2\cdots\sigma_r\tau_r),
\]
where
\[
\mathrm{wt}(\sigma,\tau):=\sum_{\sigma_i=U}a_i+\sum_{\tau_i=V}b_i.
\]
Every nonzero summand is strictly positive. Consequently, \(\tr W(A_x,B_x)\) is a polynomial in \(x\) with nonnegative coefficients.
\end{proposition}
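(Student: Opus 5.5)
The identity follows by expanding each run with the projection normal form and multiplying out. The positivity claims then come from Observation~\ref{obs:rank-one-trace} together with the sign pattern of the overlaps in Observation~\ref{obs:normal-form}.

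\textbf{Step 1: the expansion.} Since trace is cyclically invariant, we may take $W=A^{a_1}B^{b_1}\cdots A^{a_r}B^{b_r}$, so that $W(A_x,B_x)=\prod_{i=1}^r A_x^{a_i}B_x^{b_i}$. Substitute the run identities $A_x^{a_i}=P+\eps^{a_i}U$ and $B_x^{b_i}=\eps^{b_i}V+Q$, which follow from $P,U$ (resp.\ $V,Q$) being mutually orthogonal idempotents. Then distribute the noncommutative product over all $2^{2r}$ choices $\sigma_i\in\{P,U\}$, $\tau_i\in\{V,Q\}$. The scalar attached to a choice is $\prod_{\sigma_i=U}\eps^{a_i}\prod_{\tau_i=V}\eps^{b_i}=\eps^{\mathrm{wt}(\sigma,\tau)}$, and linearity of trace gives the displayed formula.

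\textbf{Step 2: each nonzero summand is positive.} By Observation~\ref{obs:rank-one-trace}, $\tr(\sigma_1\tau_1\cdots\sigma_r\tau_r)$ is the product of the cyclically adjacent overlaps of the corresponding unit vectors. Adjacent factors always pair an element of $\{P,U\}$ with an element of $\{V,Q\}$. By Observation~\ref{obs:normal-form}, the only nonvanishing such overlaps are
\[
\langle p,v\rangle=\tfrac{1}{\sqrt2},\qquad \langle u,v\rangle=\tfrac12,\qquad \langle u,q\rangle=\tfrac{1}{\sqrt2}.
\]
The vectors $p,u,v,q$ are real, so each overlap is symmetric, and each of these values is positive. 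Hence every trace in the sum is either $0$ or a product of strictly positive numbers. The only point to check here is the sign of $\langle u,v\rangle$: we have $\langle u,v\rangle=\tfrac12\bigl(0\cdot1+(-1)(-1)+1\cdot 0\bigr)=\tfrac12>0$. This sign check is what drives the argument; with a negative overlap, cancellations could occur.

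\textbf{Step 3: nonnegative coefficients.} Each summand is a nonnegative constant times $\eps^{\mathrm{wt}}=2^{\mathrm{wt}}x^{\mathrm{wt}}$. A finite sum of such terms is a polynomial in $x$ with nonnegative coefficients.
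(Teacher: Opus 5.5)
Your proposal is correct and follows essentially the same route as the paper: substitute $A_x^{a_i}=P+\eps^{a_i}U$ and $B_x^{b_i}=\eps^{b_i}V+Q$, multiply out, and then combine the rank-one trace formula with the positivity of the only nonzero overlaps $\langle p,v\rangle$, $\langle v,u\rangle$, $\langle u,q\rangle$. Your explicit sign check of $\langle u,v\rangle$, and your observation that cyclically adjacent factors always pair $\{P,U\}$ with $\{V,Q\}$, spell out details the paper leaves implicit.
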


\begin{proof}
The expansion is obtained by substituting \(A_x^{a_i}=P+\eps^{a_i}U\) and \(B_x^{b_i}=\eps^{b_i}V+Q\) and multiplying out. If a summand is nonzero, then by Observation~\ref{obs:rank-one-trace} its trace is the product of adjacent overlaps of the chosen vectors. By Observation~\ref{obs:normal-form}, every nonzero overlap is positive, namely \(1/\sqrt2\) or \(1/2\). Therefore every nonzero summand is strictly positive.
\end{proof}

This motivates the following definition.
\[
\kappa(W):=\min\bigl\{d\ge 0:\ [x^d]\tr W(A_x,B_x)\neq 0\bigr\}.
\]
Since \(\eps=2x\), the same exponent is obtained from the \(\eps\)-expansion.

\begin{proposition}[weighted shortest-bridge formula]\label{prop:kappa-formula}
Let
\[
W=A^{a_1}B^{b_1}\cdots A^{a_r}B^{b_r}.
\]
For a subset \(S\subseteq\{1,\dots,r\}\), define
\[
\Gamma(S):=\{i\in\{1,\dots,r\}: i\in S\ \text{or}\ i+1\in S\},
\]
where indices are taken modulo \(r\). Then
\[
\kappa(W)=\min_{S\subseteq\{1,\dots,r\}}
\left(
\sum_{i\notin S}a_i+\sum_{i\in\Gamma(S)}b_i
\right).
\]
\end{proposition}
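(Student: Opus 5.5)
Since every nonzero summand in Proposition~\ref{prop:admissible-expansion} is strictly positive, no cancellation can occur. Hence \(\kappa(W)\) equals the minimum of \(\mathrm{wt}(\sigma,\tau)\) over all \emph{admissible} choices \((\sigma,\tau)\), meaning those for which \(\tr(\sigma_1\tau_1\cdots\sigma_r\tau_r)\neq0\). The proof therefore reduces to a combinatorial description of the admissible choices, followed by a minimization of the weight.

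\textbf{Step 1: admissibility.} By Observation~\ref{obs:rank-one-trace}, a choice is admissible iff every cyclically adjacent pair in \(\sigma_1,\tau_1,\sigma_2,\dots,\tau_r,\sigma_1\) has nonzero overlap. By Observation~\ref{obs:normal-form}, the only forbidden adjacent pair between an \(A\)-projection and a \(B\)-projection is \(\{P,Q\}\), because \(P\perp Q\) while \(PV,UV,UQ\) are all nonzero. The adjacent pairs are \((\sigma_i,\tau_i)\) and \((\tau_i,\sigma_{i+1})\), indices mod \(r\). Let \(S=\{i:\sigma_i=P\}\). The condition then says that \(\tau_i\) must equal \(V\) whenever \(\sigma_i=P\) or \(\sigma_{i+1}=P\), i.e.\ whenever \(i\in\Gamma(S)\). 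For \(i\notin\Gamma(S)\) both neighbours are \(U\), so \(\tau_i\in\{V,Q\}\) is unrestricted.

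\textbf{Step 2: minimization.} For fixed \(S\), the weight is
\[
\sum_{i\notin S}a_i+\sum_{i\in\Gamma(S)}b_i+\sum_{i\notin\Gamma(S),\,\tau_i=V}b_i ,
\]
and this is minimized by taking \(\tau_i=Q\) for every \(i\notin\Gamma(S)\). That choice is admissible by Step 1. Minimizing over \(S\subseteq\{1,\dots,r\}\) gives the stated formula. Every admissible summand contributes a positive multiple of \(\eps^{\mathrm{wt}}\), so the coefficient of the minimal power is a sum of positive terms and is nonzero. Since \(\eps=2x\), this identifies \(\kappa(W)\).

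\textbf{Expected obstacle: the degenerate case \(r=1\).} When \(r=1\), index \(i+1\) coincides with \(i\), so the cyclic adjacency must be checked carefully. The cyclic product is \(\sigma_1\tau_1\sigma_1\), and the conditions are still just \(\sigma_1\not\perp\tau_1\) (twice), so the same description holds with \(\Gamma(S)=S\). The formula then gives \(\kappa=\min(a_1,b_1)\). This agrees with Proposition~\ref{prop:cluster-exact}, which serves as a sanity check.
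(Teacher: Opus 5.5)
Your proof is correct and follows the paper's argument. You take $S$ to be the set of $A$-runs assigned to $P$ and note that $P\perp Q$ forces $\tau_i=V$ for $i\in\Gamma(S)$, which gives the lower bound; setting $\tau_i=Q$ off $\Gamma(S)$ attains it. You also spell out two points the paper leaves implicit: the no-cancellation argument (every nonzero summand is positive, so the minimal weight really is the leading exponent) and the degenerate case $r=1$.
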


\begin{proof}
Fix an admissible choice \((\sigma,\tau)\). Let \(S=\{i: \sigma_i=P\}\). Then the \(A\)-runs outside \(S\) are exactly those assigned to \(U\), so they contribute \(\sum_{i\notin S}a_i\) to the weight.

Now consider a fixed \(B\)-run indexed by \(i\). In the cyclic product
$
\sigma_1\tau_1\sigma_2\tau_2\cdots\sigma_r\tau_r,
$
this \(B\)-run sits between \(\sigma_i\) and \(\sigma_{i+1}\). If either \(i\in S\) or \(i+1\in S\), then one of these neighboring \(A\)-runs equals \(P\). Since \(P\) is compatible only with \(V\), admissibility forces \(\tau_i=V\). Thus every admissible assignment with \(P\)-set \(S\) has weight at least
$
\sum_{i\notin S}a_i+\sum_{i\in\Gamma(S)}b_i.
$

Conversely, for any fixed \(S\), define
\[
\sigma_i=
\begin{cases}
P,& i\in S,\\
U,& i\notin S,
\end{cases}
\qquad
\tau_i=
\begin{cases}
V,& i\in\Gamma(S),\\
Q,& i\notin\Gamma(S).
\end{cases}
\]
If \(i\in\Gamma(S)\), then \(\tau_i=V\), and \(V\) is compatible with both \(P\) and \(U\). If \(i\notin\Gamma(S)\), then neither \(i\) nor \(i+1\) lies in \(S\), so \(\sigma_i=\sigma_{i+1}=U\), and \(Q\) is compatible with \(U\). Hence this assignment is admissible and has weight
$
\sum_{i\notin S}a_i+\sum_{i\in\Gamma(S)}b_i.
$
Taking the minimum over \(S\) proves the formula.
\end{proof}

\begin{remark}\label{rem:interpretation-kappa}
Proposition~\ref{prop:kappa-formula} is the central structural statement. The subset \(S\) specifies which \(A\)-runs stay in the ``macroscopic'' sector \(P\). All remaining \(A\)-runs must go through the ``small'' sector \(U\), and each selected \(P\)-run forces its two neighboring \(B\)-runs through the ``small'' sector \(V\). Thus \(\kappa(W)\) is a weighted closed-neighborhood cost on the run cycle, not a monotone function of the number of runs.
\end{remark}

\section{Failure of monotone fragmentation}

We now compare three words with the same total content \(n=m=5\): the clustered word \(A^5B^5\), the bridge word \(A^3BAB^3AB\), and the fully alternating word \((AB)^5\).

\begin{proposition}\label{prop:three-examples}
For the family \((A_x,B_x)\),
\[
\kappa(A^5B^5)=5,
\qquad
\kappa(A^3BAB^3AB)=4,
\qquad
\kappa((AB)^5)=5.
\]
Moreover,
\begin{align*}
\tr(A_x^5B_x^5)&=32x^5+256x^{10},\\
\tr(A_x^3B_xA_xB_x^3A_xB_x)&=x^4+O(x^5),\\
\tr((A_xB_x)^5)&=2x^5+O(x^6).
\end{align*}
\end{proposition}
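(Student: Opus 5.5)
The plan is to read off each exponent \(\kappa\) from the weighted shortest-bridge formula of Proposition~\ref{prop:kappa-formula}, and then to get the leading coefficient by listing the admissible assignments \((\sigma,\tau)\) that attain the minimal weight in Proposition~\ref{prop:admissible-expansion}. Each such assignment contributes a strictly positive product of overlaps (Observations~\ref{obs:normal-form} and~\ref{obs:rank-one-trace}), so no cancellation can occur. Hence the leading coefficient is the sum of these products times \(2^{\kappa}\), since \(\eps=2x\). The clustered word needs no new work: Proposition~\ref{prop:cluster-exact} already gives \(32x^5+256x^{10}\), so \(\kappa(A^5B^5)=5\).

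For the bridge word \(A^3BAB^3AB\) we have \(r=3\), \(a=(3,1,1)\) and \(b=(1,3,1)\), with \(B\)-run \(i\) sitting between \(A\)-runs \(i\) and \(i+1\) cyclically. I will enumerate the eight subsets \(S\).
\begin{itemize}
\item \(S=\emptyset\) costs \(5\).
\item \(S=\{1\}\) gives \(\Gamma=\{1,3\}\), with cost \((1+1)+(1+1)=4\).
\item \(S=\{2\}\) and \(S=\{3\}\) both cost \(8\).
\item Any \(S\) with at least two elements has \(\Gamma(S)=\{1,2,3\}\), so the \(B\)-cost is already \(5\).
\end{itemize}
Thus \(\kappa=4\), attained only at \(S=\{1\}\). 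Raising weight can only add \(V\)'s, so the minimizing assignment is unique: \(\sigma=(P,U,U)\), \(\tau=(V,Q,V)\).

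Its trace is the cyclic product of overlaps along \(p,v,u,q,u,v\):
\[
(1/\sqrt2)(1/2)(1/\sqrt2)(1/\sqrt2)(1/2)(1/\sqrt2)=1/16 .
\]
Multiplying by \(\eps^4=16x^4\) gives the coefficient \(1\), hence \(x^4+O(x^5)\).

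For \((AB)^5\), all \(a_i=b_i=1\) and \(r=5\), so the cost is \((5-|S|)+|\Gamma(S)|\). Since \(\Gamma(S)=S\cup(S-1)\), we have \(|\Gamma(S)|\ge|S|+1\) for every nonempty proper \(S\): a proper nonempty subset of a cycle is not invariant under the shift. So the cost is at least \(6\) except at \(S=\emptyset\) and \(S=\{1,\dots,5\}\), which both cost exactly \(5\). Therefore \(\kappa=5\), and the minimizers are exactly two assignments:
\begin{itemize}
\item all \(U\) with all \(Q\), where every \(\tau_i=Q\) is forced to be minimal;
\item all \(P\) with all \(V\).
\end{itemize}
Each has trace \((1/\sqrt2)^{10}=1/32\), so the leading term is \(\tfrac{2}{32}\eps^5=2x^5\).

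The only step requiring care is the claim of uniqueness of the minimizers. For a fixed \(S\), choosing \(\tau_i=V\) off \(\Gamma(S)\) strictly increases the weight, and the proof of Proposition~\ref{prop:kappa-formula} shows that every admissible assignment is bounded below by the cost of its \(P\)-set. Given these two facts, the enumeration above is exhaustive, and the positivity of every summand guarantees that the computed leading coefficients are exactly the ones stated.
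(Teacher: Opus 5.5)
Your proof is correct and is essentially the paper's argument: you read $\kappa$ off the shortest-bridge formula, identify the minimizing admissible assignments, and take the leading coefficient as the sum of their strictly positive overlap products times $\eps^{\kappa}=2^{\kappa}x^{\kappa}$, with the $A^5B^5$ and $(AB)^5$ cases handled just as in the paper. The one difference is for $A^3BAB^3AB$: you enumerate all eight subsets $S$, whereas the paper uses a path-crossing lower bound (reaching $Q$ from $P$ and returning needs two $U$'s and two $V$'s). Your enumeration has the merit of proving that $S=\{1\}$ is the unique minimizer. The paper only asserts this, but it is exactly what pins the leading coefficient at $1$.
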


\begin{proof}
For \(A^5B^5\), Proposition~\ref{prop:cluster-exact} gives the formula, hence \(\kappa(A^5B^5)=5\).

For \(A^3BAB^3AB\), the run lengths are
\[
(a_1,a_2,a_3)=(3,1,1),
\qquad
(b_1,b_2,b_3)=(1,3,1).
\]
Take \(S=\{1\}\). Then
$
\Gamma(S)=\{1,3\},
$
so Proposition~\ref{prop:kappa-formula} gives
\[
\kappa(A^3BAB^3AB)\le (1+1)+(1+1)=4.
\]
On the other hand, any admissible assignment that uses both \(P\) and \(Q\) must contain at least two occurrences of \(U\) and two occurrences of \(V\), because the compatibility graph is the path \(P-V-U-Q\) and a closed walk visiting both endpoints has to cross the middle edge in both directions. Hence the cost is at least \(4\). Therefore \(\kappa(A^3BAB^3AB)=4\).

The leading assignment is uniquely obtained from \(S=\{1\}\): the long \(A\)-run is assigned to \(P\), the long \(B\)-run to \(Q\), and the four singleton runs are assigned to the bridge sectors \(U,V,U,V\). Its leading coefficient is
\[
\eps^4\tr(PVUQUV)
=\eps^4\,\langle p,v\rangle\langle v,u\rangle\langle u,q\rangle\langle q,u\rangle\langle u,v\rangle\langle v,p\rangle
=\frac{\eps^4}{16}=x^4.
\]
Hence \(\tr(A_x^3B_xA_xB_x^3A_xB_x)=x^4+O(x^5)\).

For \((AB)^5\), every run has length \(1\). Proposition~\ref{prop:kappa-formula} becomes
\[
\kappa((AB)^5)=\min_{S\subseteq\{1,\dots,5\}}\bigl(5-|S|+|\Gamma(S)|\bigr).
\]
If \(S\) is a nonempty proper subset of the cyclic group \(\mathbb Z/5\mathbb Z\), then \(\Gamma(S)=S\cup(S-1)\) strictly contains \(S\), implying that \(|\Gamma(S)|>|S|\). Thus the minimum \(5\) occurs only for \(S=\varnothing\) and \(S=\{1,2,3,4,5\}\). These two minimizers correspond to the assignments
$$
(U,Q,U,Q,U,Q,U,Q,U,Q)
$$
and
$$
(P,V,P,V,P,V,P,V,P,V).
$$
Each contributes
\begin{align*}
\eps^5\tr((UQ)^5)=\eps^5\Bigl(\frac12\Bigr)^5=x^5,\\
\eps^5\tr((PV)^5)=\eps^5\Bigl(\frac12\Bigr)^5=x^5.
\end{align*}
Hence \(\tr((A_xB_x)^5)=2x^5+O(x^6)\).
\end{proof}

\begin{table}[t]
\centering
\caption{\justifying The bridge word has intermediate fragmentation but a strictly smaller leading exponent. Thus there is no monotone ordering from ``more clustered'' to ``more fragmented.''}
\label{tab:comparison}
\begin{tabular*}{\textwidth}{@{\extracolsep{\fill}}lccc@{}}
\toprule
Word \(W\) & \makecell[c]{Number of run pairs $r$} & \(\kappa(W)\) & \makecell[c]{Leading term of \(\tr W(A_x,B_x)\)} \\
\midrule
\(A^5B^5\) & 1 & 5 & \(32x^5\) \\
\(A^3BAB^3AB\) & 3 & 4 & \(x^4\) \\
\((AB)^5\) & 5 & 5 & \(2x^5\) \\
\bottomrule
\end{tabular*}
\end{table}

Table~\ref{tab:comparison} summarizes the outcome. The table falsifies the heuristic in the strongest possible way. The fully alternating word is indeed smaller than the clustered word, but only by a coefficient. The decisive word is neither the most clustered nor the most fragmented. It is the one that localizes the ``small'' sectors on four singleton bridge letters.

\section{Classification of the order-\texorpdfstring{$x^4$}{x4} words}

We now classify which words achieve leading order \(x^4\).

\begin{proposition}\label{prop:classification}
Assume \(n,m\ge 5\), and let \(W\) be any word with \(n\) letters \(A\) and \(m\) letters \(B\). Then \(\kappa(W)\ge 4\). Moreover, \(\kappa(W)=4\) if and only if, up to cyclic rotation,
$
W=A^{n-2}BAB^{m-2}AB.
$
\end{proposition}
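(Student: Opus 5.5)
The plan is to work directly with the admissible-assignment expansion of Proposition~\ref{prop:admissible-expansion} rather than with the closed form of Proposition~\ref{prop:kappa-formula}. By that proposition, $\kappa(W)$ is the minimum of $\mathrm{wt}(\sigma,\tau)$ over admissible assignments $(\sigma,\tau)$, meaning those with nonzero trace. By Observation~\ref{obs:rank-one-trace}, admissibility says that consecutive entries of the cyclic sequence $\sigma_1\tau_1\cdots\sigma_r\tau_r$ are adjacent in the path $P-V-U-Q$. The case $r=1$, i.e.\ $W=A^nB^m$, is settled by Proposition~\ref{prop:cluster-exact}, which gives $\kappa=\min(n,m)\ge5$; so assume $r\ge2$.

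For the lower bound I split admissible assignments into three classes.
\begin{itemize}
\item If no $\tau_i$ equals $Q$, then every $\tau_i=V$, and the weight is at least $\sum_i b_i=m\ge5$.
\item If no $\sigma_i$ equals $P$, then every $\sigma_i=U$, and the weight is at least $n\ge5$.
\item If both $P$ and $Q$ occur, the closed walk must traverse the middle edge $V-U$ in both directions. This is the argument already used in Proposition~\ref{prop:three-examples}. It forces at least two $U$-runs and two $V$-runs, each of length at least $1$, so the weight is at least $4$.
\end{itemize}
Hence $\kappa(W)\ge4$.

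For the equality characterization, suppose some admissible assignment has weight exactly $4$. By the three cases, it uses both $P$ and $Q$, has exactly two $U$-runs and exactly two $V$-runs, and each of these four runs has length $1$. I then pin down the cyclic pattern by a counting step.
\begin{itemize}
\item Each $P$ has both cyclic neighbours equal to $V$, since $V$ is the only neighbour of $P$. If there were two or more $P$'s, the two available $V$'s would have to separate them. Then every $\tau$ adjacent to a $P$ is one of these two $V$'s, the cycle closes up as $PVPV$, and no room is left for $Q$, a contradiction.
\item So there is exactly one $P$, and symmetrically exactly one $Q$, whose neighbours are both $U$.
\item Counting gives $r=3$, with $\sigma=(P,U,U)$ and $\tau$ consisting of one $Q$ and two $V$'s. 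Admissibility then forces the cyclic order $P\,V\,U\,Q\,U\,V$.
\end{itemize}
Reading off run lengths, the two $U$-runs and two $V$-runs have length $1$. The $P$-run therefore carries $n-2$ letters and the $Q$-run carries $m-2$, which gives $W=A^{n-2}BAB^{m-2}AB$ up to rotation. For the converse, this word with $S=\{1\}$ in Proposition~\ref{prop:kappa-formula} gives cost $1+1+1+1=4$, exactly as in Proposition~\ref{prop:three-examples}. Combined with the lower bound, this gives $\kappa=4$.

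I expect the main obstacle to be the combinatorial step that excludes several $P$- or $Q$-runs, that is, turning the edge-crossing intuition into a clean count of neighbours on the cyclic sequence. One must check carefully that $V$'s shared between two $P$'s cannot also serve the $Q$ side, and that degenerate coincidences of neighbours when $r=2$ are handled. I would treat $r=2$ separately by direct enumeration if the general counting argument becomes awkward there.
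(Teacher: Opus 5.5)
Your proposal is correct and follows essentially the same route as the paper. Both use the same three-way split (no $P$, no $Q$, both present) and the same edge-crossing argument on the path $P-V-U-Q$, which forces two $U$'s and two $V$'s and hence $\kappa(W)\ge 4$. Both then read off $A^{n-2}BAB^{m-2}AB$ in the equality case and get the converse from $S=\{1\}$.

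Your neighbour-counting step goes slightly beyond the paper. Each $P$ needs two distinct $V$-neighbours when $r\ge 2$, so two $P$'s would use up both $V$'s and close the cycle as $PVPV$. This makes explicit the uniqueness of the $P$- and $Q$-runs, which the paper only asserts through its pattern $P\cdots P\,V\,U\,Q\cdots Q\,U\,V$. Your worry about $r=2$ is unnecessary: a $P$, a $Q$, two $U$'s and two $V$'s already give six entries, which forces $r\ge 3$.
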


\begin{proof}
Let \((\sigma,\tau)\) be an admissible assignment achieving the minimum defining \(\kappa(W)\).

If no \(P\) occurs, then every \(A\)-run is assigned to \(U\), so the total cost is at least \(n\ge 5\). If no \(Q\) occurs, then every \(B\)-run is assigned to \(V\), so the total cost is at least \(m\ge 5\). Therefore any assignment with cost at most \(4\) must use both \(P\) and \(Q\).

Now consider the cyclic sequence of chosen projections. Because the compatibility graph is the path
\[
P-V-U-Q,
\]
a closed admissible walk that visits both endpoints must contain at least two occurrences of \(U\) and two occurrences of \(V\): one needs a segment \(P\to V\to U\to Q\) to reach \(Q\), and another segment \(Q\to U\to V\to P\) to return. Since every run length is at least \(1\), the cost is therefore at least
$
1+1+1+1=4.
$
Hence \(\kappa(W)\ge 4\).

Suppose now that \(\kappa(W)=4\). Then the previous paragraph shows that the minimizing assignment contains exactly two \(U\)-runs and exactly two \(V\)-runs, and each of those four runs must have length \(1\). Therefore, the cyclic sequence of chosen projections has the unique form
\[
P\cdots P\,V\,U\,Q\cdots Q\,U\,V\,P\cdots P.
\]
Thus there is one contiguous \(P\)-block, one contiguous \(Q\)-block, two singleton \(U\)-blocks, and two singleton \(V\)-blocks. Translating back to letters, this means that, up to cyclic rotation, the word is exactly
\[
A^{n-2}BAB^{m-2}AB.
\]
Conversely, this word admits the assignment described in the proof of Proposition~\ref{prop:three-examples}, so its leading exponent is \(4\).
\end{proof}

\begin{figure}
    \centering
    \includegraphics[width=0.5\linewidth]{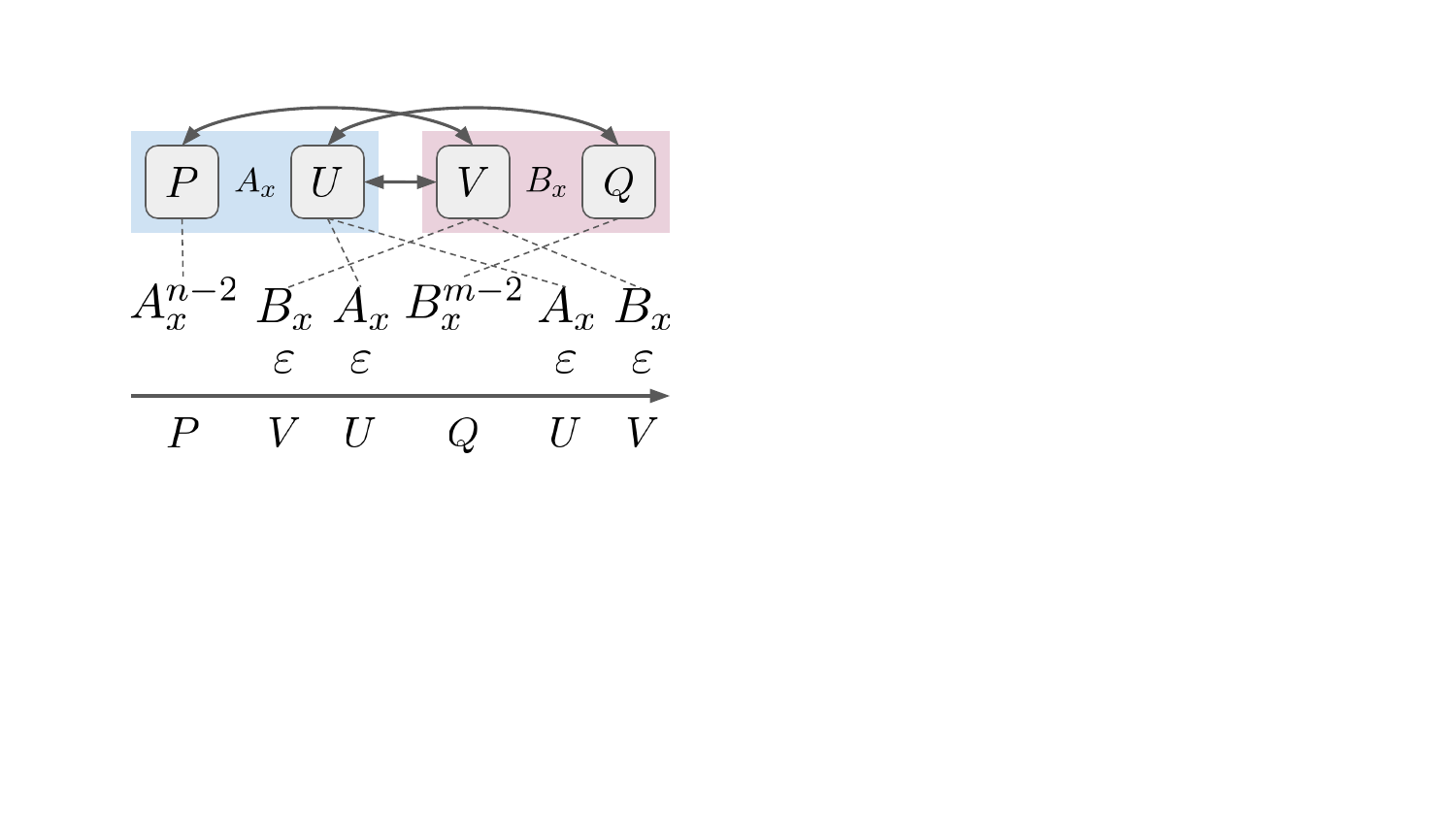}
    \caption{Illustration of the leading order \(x^4\) for the bridge pattern $A_x^{n-2}B_xA_xB_x^{m-2}A_xB_x$.}
    \label{fig:bridge_pattern}
\end{figure}

A visualization of the bridge pattern is provided in Figure~\ref{fig:bridge_pattern}. The preceding proposition immediately controls the averaged trace.

\begin{corollary}\label{cor:average}
For every \(n,m\ge 5\),
\[
p_{n,m}(A_x,B_x)=\frac{n+m}{\binom{n+m}{n}}x^4+O(x^5).
\]
In particular, for \((n,m)=(5,5)\),
\[
p_{5,5}(A_x,B_x)=\frac{5}{126}x^4+O(x^5).
\]
\end{corollary}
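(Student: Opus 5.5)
We plan to sum the word traces over $\W_{n,m}$, using the nonnegativity of coefficients (Proposition~\ref{prop:admissible-expansion}) together with the classification in Proposition~\ref{prop:classification}. Since every $\tr W(A_x,B_x)$ is a polynomial in $x$ with nonnegative coefficients, no cancellation can occur between words. The $x^4$ coefficient of $\sum_{W\in\W_{n,m}}\tr W(A_x,B_x)$ is therefore the sum of the $x^4$ coefficients of the individual words. Every word has $\kappa(W)\ge 4$, so all words contribute $O(x^4)$. Only the words that are cyclic rotations of $A^{n-2}BAB^{m-2}AB$ contribute at order exactly $x^4$; all others are $O(x^5)$. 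Words containing only one letter type cannot occur, because $n,m\ge 5$.

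The first step is to compute the common leading coefficient of these bridge words. Trace is cyclically invariant, so every rotation has the same trace as $A^{n-2}BAB^{m-2}AB$ itself. For that word we redo the computation from Proposition~\ref{prop:three-examples} with general run lengths $(a_1,a_2,a_3)=(n-2,1,1)$ and $(b_1,b_2,b_3)=(1,m-2,1)$. The proof of Proposition~\ref{prop:classification} shows that any cost-$4$ assignment has exactly two $U$'s and two $V$'s, each on a singleton run, with the pattern $P\,V\,U\,Q\,U\,V$. This forces the unique assignment $S=\{1\}$. Its contribution is $\eps^4\tr(PVUQUV)=\eps^4/16=x^4$, using the overlaps from Observation~\ref{obs:normal-form}. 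Hence each bridge word has trace $x^4+O(x^5)$. We must also confirm that $n-2\ge 1$ and $m-2\ge 1$, so the run structure really has $r=3$; this holds since $n,m\ge 5$.

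The second step is to count the linear words in $\W_{n,m}$ that are cyclic rotations of $A^{n-2}BAB^{m-2}AB$. This word has length $n+m$, so there are $n+m$ rotations. We must show they are pairwise distinct, i.e.\ that the word is primitive with trivial cyclic stabilizer. A nontrivial rotation fixing the word would map maximal cyclic runs to maximal cyclic runs while preserving their lengths. The $A$-runs have lengths $(n-2,1,1)$, and $n-2\ge 3$ makes the long run unique. The rotation must therefore fix the long $A$-run, which forces it to be the identity. So exactly $n+m$ linear words contribute, each with leading term $x^4$.

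Dividing by $\binom{n+m}{n}$ gives $p_{n,m}(A_x,B_x)=\frac{n+m}{\binom{n+m}{n}}x^4+O(x^5)$. For $(n,m)=(5,5)$ this is $10/252=5/126$. The main point needing care is the rotation count: primitivity of the bridge word relies on $n\ge 5$, whereas for $n=4$ the $A$-run lengths would be $(2,1,1)$. Equally important is the uniqueness of the minimizing assignment, which guarantees that the leading coefficient is exactly $1$ rather than a sum of several contributions. Both facts come directly from the structure established in Proposition~\ref{prop:classification}.
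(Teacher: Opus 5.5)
Your proposal is correct and follows essentially the same route as the paper: the classification leaves only the $n+m$ pairwise distinct rotations of $A^{n-2}BAB^{m-2}AB$ at order $x^4$, each with coefficient $\eps^4\tr(PVUQUV)=x^4$. Your explicit check that $S=\{1\}$ is the unique cost-$4$ assignment fills in a step the paper only asserts.

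One minor correction to your closing remark: at $n=4$ the $A$-run lengths $(2,1,1)$ still have a unique long run, so primitivity does not fail there. What actually needs $n,m\ge 5$ is the classification itself, since for $n=4$ the all-$U$/all-$Q$ assignment already costs $4$ for every word.
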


\begin{proof}
By Proposition~\ref{prop:classification}, the only words with leading order \(x^4\) are the cyclic shifts of
\[
W_{n,m}:=A^{n-2}BAB^{m-2}AB.
\]
Because \(n,m\ge 5\), this word has a unique long \(A\)-block of length \(n-2\ge 3\) and a unique long \(B\)-block of length \(m-2\ge 3\). All other runs are singletons. Hence no nontrivial cyclic shift fixes the word, so the \(n+m\) cyclic shifts are all distinct as linear words.

Each such word has the same leading coefficient, namely
\[
\eps^4\tr(PVUQUV)=\frac{\eps^4}{16}=x^4.
\]
Every other word is \(O(x^5)\). Summing over all \(\binom{n+m}{n}\) linear words therefore yields
\[
\sum_{W\in\W_{n,m}}\tr W(A_x,B_x)=(n+m)x^4+O(x^5).
\]
Dividing by \(\binom{n+m}{n}\) proves the formula.
\end{proof}

\begin{remark}\label{rem:rare-word}
For \((n,m)=(5,5)\), there are \(\binom{10}{5}=252\) words in total, and \(10\) of them contribute at order \(x^4\). The remaining \(242\) words are all \(O(x^5)\). Thus the averaged trace is controlled by a tiny exceptional family of ``bridge'' words. Since Proposition~\ref{prop:admissible-expansion} shows that all coefficients are nonnegative, this dominance cannot be cancelled away.
\end{remark}

Combining Corollary~\ref{cor:average} with Proposition~\ref{prop:cluster-exact} gives the asymptotic ratio.

\begin{corollary}\label{cor:ratio}
Let \(\ell:=\min\{n,m\}\), and define
\[
d_{n,m}:=
\begin{cases}
2^{\ell-1},& n\ne m,\\
2^{\ell},& n=m.
\end{cases}
\]
Then for every \(n,m\ge 5\),
\[
\frac{p_{n,m}(A_x,B_x)}{\tr(A_x^nB_x^m)}
=\frac{n+m}{d_{n,m}\binom{n+m}{n}}x^{4-\ell}+O\bigl(x^{5-\ell}\bigr).
\]
Hence the ratio diverges as \(x\to 0^+\) whenever \(\min\{n,m\}>4\).
\end{corollary}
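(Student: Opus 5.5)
The plan is to divide the expansion of Corollary~\ref{cor:average} by the exact polynomial of Proposition~\ref{prop:cluster-exact}, after first reading off the leading term of the denominator. By Proposition~\ref{prop:cluster-exact},
\[
\tr(A_x^nB_x^m)=2^{m-1}x^m+2^{n-1}x^n+2^{n+m-2}x^{n+m}.
\]
Its lowest-order term is $x^\ell$ with $\ell=\min\{n,m\}$. If $n\ne m$, only one of the first two monomials has exponent $\ell$, so its coefficient is $2^{\ell-1}$. If $n=m$, the two monomials merge into $2\cdot 2^{\ell-1}=2^\ell$. This is exactly $d_{n,m}$. Since the remaining exponents are at least $\ell+1$, we get $\tr(A_x^nB_x^m)=d_{n,m}x^\ell\bigl(1+O(x)\bigr)$.

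Next I take the numerator from Corollary~\ref{cor:average}, which applies because $n,m\ge5$:
\[
p_{n,m}(A_x,B_x)=c\,x^4+O(x^5),\qquad c:=\frac{n+m}{\binom{n+m}{n}}.
\]
Dividing, and using $(1+O(x))^{-1}=1+O(x)$ for small $x>0$ (legitimate because $d_{n,m}>0$), gives
\[
\frac{p_{n,m}}{\tr(A_x^nB_x^m)}=\frac{c\,x^4+O(x^5)}{d_{n,m}x^\ell}\bigl(1+O(x)\bigr)=\frac{c}{d_{n,m}}x^{4-\ell}+O(x^{5-\ell}).
\]
The only point needing care is that the $O$-terms are meant as $x\to0^+$. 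Both numerator and denominator are polynomials, so the ratio is a Laurent expansion in $x$ with finitely many negative powers, and the error bounds are uniform on a small interval $(0,x_0]$.

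For divergence, note that $\ell\ge5>4$ makes the exponent $4-\ell$ negative. The coefficient $c/d_{n,m}$ is strictly positive, so the leading term tends to $+\infty$. The error term $O(x^{5-\ell})$ is of strictly lower order, since $x^{5-\ell}/x^{4-\ell}=x\to0$. Hence the ratio diverges as $x\to0^+$.

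There is no real obstacle here. The one step to state explicitly is the merging of the two lowest monomials when $n=m$, since that is what produces the factor $2^\ell$ in $d_{n,m}$.
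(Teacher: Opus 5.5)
Your proof is correct and follows the same approach as the paper. You read off the leading term $d_{n,m}x^{\ell}$ of the exact clustered trace from Proposition~\ref{prop:cluster-exact}, noting that the two lowest monomials merge when $n=m$, take the numerator $\frac{n+m}{\binom{n+m}{n}}x^4+O(x^5)$ from Corollary~\ref{cor:average}, and divide; you also spell out the $(1+O(x))^{-1}$ step and the divergence argument that the paper leaves implicit.
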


\begin{proof}
By Proposition~\ref{prop:cluster-exact}, the denominator has leading term \(d_{n,m}x^{\ell}\). By Corollary~\ref{cor:average}, the numerator has leading term \(\frac{n+m}{\binom{n+m}{n}}x^4\). Dividing the two asymptotic expansions gives the result.
\end{proof}

For \(n=m=5\) this becomes
\[
\frac{p_{5,5}(A_x,B_x)}{\tr(A_x^5B_x^5)}
=\frac{5}{4032}\frac{1}{x}+O(1).
\]

\section{Failure of the original intuition}

The preceding analysis suggests a more precise replacement for the clustering (or fragmentation) heuristic.

\paragraph{The right variable is vanishing order, not visual fragmentation.}
Near \(x=0\), every mixed trace tends to zero because \(A_0=P\), \(B_0=Q\), and \(PQ=0\). Therefore different words are compared first by their leading exponent. The exponent is the weighted shortest-bridge cost \(\kappa(W)\), not the number of runs. The three examples from Table~\ref{tab:comparison} already show that the map
\[
\text{``more fragmented''}\longmapsto \text{``smaller trace''}
\]
is not monotonic.

\paragraph{The failure occurs near a commuting pair.}
A natural stability intuition would say that if \(A_x\) and \(B_x\) almost commute, then any ordering inequality between different words should fail only mildly, if at all. But here
\[
[A_x,B_x]=
\begin{pmatrix}
0&x(x-1)&-x^2\\
x(1-x)&0&x(x-1)\\
x^2&x(1-x)&0
\end{pmatrix},
\]
so
\begin{align*}
\|[A_x,B_x]\|_\mathrm{F}^2 & =4x^2(1-x)^2+2x^4=4x^2+O(x^3),\\
\|[A_x,B_x]\|_\mathrm{F} & =2x+O(x^2).
\end{align*}
Thus the pair becomes commutative with a linear rate as \(x\to0\), while Corollary~\ref{cor:ratio} shows that the ratio of the averaged trace to the clustered trace diverges with a negative power of \(x\). The commuting limit is therefore singular for this comparison problem.

\paragraph{Averaging is governed by rare exceptional words.}
The average \(p_{n,m}(A_x,B_x)\) is not controlled by the bulk of the word ensemble. Instead it is controlled by the exceptional words that achieve the minimal bridge cost. For \((n,m)=(5,5)\), ten bridge words dominate an average over 252 words. This is the opposite of the intuition that the average should resemble a ``typical'' fragmented word.

\paragraph{The right replacement principle is a shortest-bridge principle.}
The projection normal form shows that the only way to connect the macroscopic \(A\)-sector \(P\) to the macroscopic \(B\)-sector \(Q\) is through the two bridge sectors \(V\) and \(U\). A word becomes large when it organizes its run lengths so that these bridges are paid for on very short runs. The extremal architecture is therefore not maximal clustering and not maximal alternation, but the mixed bridge pattern
$
A^{n-2}BAB^{m-2}AB.
$
That pattern minimizes the bridge cost by concentrating the expensive transitions on four singleton letters.

\section{Discussion}

Our results show that the failure of the first inequality in Eq.~\eqref{eq:refinement} is not an isolated numerical accident but part of a simple one-parameter mechanism. Conceptually, this is also physically significant because the original BMV conjecture has direct implications for quantum statistical mechanics. For a finite-dimensional Hamiltonian $H$, the partition function is $Z(\beta)=\Tr(e^{-\beta H})$, and BMV concerns positivity properties of such trace exponentials under positive perturbations~\cite{bessis1975monotonic,lieb2004equivalent,stahl2013proof}. The proposed refinement aimed to strengthen positivity into an ordering among different noncommutative products, thereby connecting Gibbs-type expressions, Golden--Thompson-type phenomena, and averaged words in positive semidefinite matrices \cite{cohen1982eigenvalue, cohen1988spectral, so1992equality, hiai1993golden, ando1994log, friedland1994product, sutter2017multivariate, randrianantoanina2024triple}. The one-parameter counterexamples derived here show that this stronger ordering principle fails already in dimension $3$ and already at $n=m=5$.

\section*{Acknowledgments}

This work is in part supported by the National Research Foundation of Korea (NRF, RS-2024-00451435 (20\%), RS-2024-00413957 (20\%)), Institute of Information \& communications Technology Planning \& Evaluation (IITP, RS-2025-02305453 (15\%), RS-2025-02273157 (15\%), RS-2025-25442149 (15\%), RS-2021-II211343 (15\%)) grant funded by the Ministry of Science and ICT (MSIT), Institute of New Media and Communications (INMAC), and the BK21 FOUR program of the Education, Artificial Intelligence Graduate School Program (Seoul National University), and Research Program for Future ICT Pioneers, Seoul National University in 2026.

\bibliography{apssampv1}

@article{bessis1975monotonic,
  title={Monotonic converging variational approximations to the functional integrals in quantum statistical mechanics},
  author={Bessis, Daniel and Moussa, Pierre and Villani, Matteo},
  journal={Journal of Mathematical Physics},
  volume={16},
  number={11},
  pages={2318--2325},
  year={1975},
  publisher={American Institute of Physics}
}

@article{lieb2004equivalent,
  title={Equivalent forms of the {B}essis--{M}oussa--{V}illani conjecture},
  author={Lieb, Elliott H and Seiringer, Robert},
  journal={Journal of Statistical Physics},
  volume={115},
  number={1},
  pages={185--190},
  year={2004},
  publisher={Springer}
}

@article{hagele2007proof,
  title={{Proof of the cases $p\le7$ of the Lieb-Seiringer formulation of the Bessis-Moussa-Villani conjecture}},
  author={H{\"a}gele, Daniel},
  journal={Journal of Statistical Physics},
  volume={127},
  number={6},
  pages={1167--1171},
  year={2007},
  publisher={Springer}
}

@misc{iqoqi_problem40,
  author       = {{IQOQI Vienna}},
  title        = {{Problem 40: Refinement of the Bessis-Moussa-Villani conjecture}},
  howpublished = {List of Open Quantum Problems},
  url          = {https://oqp.iqoqi.oeaw.ac.at/refinement-of-the-bessis-moussa-villani-conjecture},
  note         = {{Accessed: 2026-03-20}}
}

@ARTICLE{stahl2013proof,
  title     = "Proof of the {BMV} conjecture",
  author    = "Stahl, Herbert R",
  journal   = "Acta Math.",
  volume    =  211,
  number    =  2,
  pages     = "255--290",
  year      =  2013
}

@misc{adnane2017refinement,
  author       = {Mostafa Adnane},
  title        = {{Refinement of the Bessis-Moussa-Villani Conjecture}},
  year         = {2017}
}

@article{hillar2007advances,
  title={{Advances on the Bessis--Moussa--Villani trace conjecture}},
  author={Hillar, Christopher J},
  journal={Linear Algebra and its Applications},
  volume={426},
  number={1},
  pages={130--142},
  year={2007},
  publisher={Elsevier}
}

@article{hillar2005positivity,
  title={On the positivity of the coefficients of a certain polynomial defined by two positive definite matrices},
  author={Hillar, Christopher J and Johnson, Charles R},
  journal={Journal of Statistical Physics},
  volume={118},
  number={3},
  pages={781--789},
  year={2005},
  publisher={Springer}
}

@article{burgdorf2011sums,
  title={{Sums of Hermitian squares as an approach to the BMV conjecture}},
  author={Burgdorf, Sabine},
  journal={Linear and Multilinear Algebra},
  volume={59},
  number={1},
  pages={1--9},
  year={2011},
  publisher={Taylor \& Francis}
}

@article{klep2008sums,
  title={{Sums of Hermitian squares and the BMV conjecture}},
  author={Klep, Igor and Schweighofer, Markus},
  journal={Journal of Statistical Physics},
  volume={133},
  number={4},
  pages={739--760},
  year={2008},
  publisher={Springer}
}

@article{fleischhack2010asymptotic,
  title={{Asymptotic positivity of Hurwitz product traces: Two proofs}},
  author={Fleischhack, Christian and Friedland, Shmuel},
  journal={Linear Algebra and its Applications},
  volume={432},
  number={6},
  pages={1363--1383},
  year={2010},
  publisher={Elsevier}
}

@article{lieb2012further,
  title={{Further implications of the Bessis--Moussa--Villani conjecture}},
  author={Lieb, Elliott H and Seiringer, Robert},
  journal={Journal of Statistical Physics},
  volume={149},
  number={1},
  pages={86--91},
  year={2012},
  publisher={Springer}
}

@article{landweber2009d,
  title={{On D. H{\"a}gele’s approach to the Bessis--Moussa--Villani conjecture}},
  author={Landweber, Peter S and Speer, Eugene R},
  journal={Linear Algebra and its Applications},
  volume={431},
  number={8},
  pages={1317--1324},
  year={2009},
  publisher={Elsevier}
}

@article{collins2010sum,
  title={{Sum-of-squares results for polynomials related to the Bessis--Moussa--Villani conjecture}},
  author={Collins, Beno{\^\i}t and Dykema, Kenneth J and Torres-Ayala, Francisco},
  journal={Journal of Statistical Physics},
  volume={139},
  number={5},
  pages={779--799},
  year={2010},
  publisher={Springer}
}

@article{johnson2005principal,
  title={{Principal minor sums of $(A+ tB)^m$}},
  author={Johnson, Charles R and Leichenauer, Stefan and McNamara, Peter and Costas, Roberto},
  journal={Linear Algebra and its Applications},
  volume={411},
  pages={386--389},
  year={2005},
  publisher={Elsevier}
}

@article{cafuta2010note,
  title={{A note on the nonexistence of sum of squares certificates for the Bessis--Moussa--Villani conjecture}},
  author={Cafuta, Kristijan and Klep, Igor and Povh, Janez},
  journal={Journal of Mathematical Physics},
  volume={51},
  number={8},
  year={2010},
  publisher={AIP Publishing}
}

@article{friedland2008remarks,
  title={{Remarks on BMV conjecture}},
  author={Friedland, Shmuel},
  journal={arXiv preprint arXiv:0804.3948},
  year={2008}
}

@article{hansen2006trace,
  title={{Trace functions as Laplace transforms}},
  author={Hansen, Frank},
  journal={Journal of Mathematical Physics},
  volume={47},
  number={4},
  year={2006},
  publisher={AIP Publishing}
}

@article{moussa2000representation,
  title={{On the representation of $\mathrm{Tr}(e^{(A-\lambda B)})$ as a Laplace transform}},
  author={Moussa, Pierre},
  journal={Reviews in Mathematical Physics},
  volume={12},
  number={04},
  pages={621--655},
  year={2000},
  publisher={World Scientific}
}

@article{le1980representation,
  title={{Representation of the function $\mathrm{Tr}(\exp(A-\lambda B))$ as a Laplace transform with positive weight and some matrix inequalities}},
  author={Le Couteur, KJ},
  journal={Journal of Physics A: Mathematical and General},
  volume={13},
  number={10},
  pages={3147--3159},
  year={1980}
}

@article{mehta1976integral,
  title={{On an integral representation of the function $\mathrm{Tr}(\exp(A-\lambda B))$}},
  author={Mehta, ML and Kumar, Kailash},
  journal={Journal of Physics A: Mathematical and General},
  volume={9},
  number={2},
  pages={197--206},
  year={1976}
}

@article{fannes2003perturbation,
  title={{Perturbation of Wigner matrices and a conjecture}},
  author={Fannes, Mark and Petz, D{\'e}nes},
  journal={Proceedings of the American Mathematical Society},
  volume={131},
  number={7},
  pages={1981--1988},
  year={2003}
}

@article{drmota2005hyper,
  title={{A hyper-geometric approach to the BMV-conjecture}},
  author={Drmota, Michael and Schachermayer, Walter and Teichmann, Josef},
  journal={Monatshefte f{\"u}r Mathematik},
  volume={146},
  number={3},
  pages={179--201},
  year={2005},
  publisher={Springer}
}

@article{heinavaara2025tracial,
  title={Tracial joint spectral measures},
  author={Hein{\"a}vaara, Otte},
  journal={Inventiones Mathematicae},
  volume={239},
  number={2},
  pages={505--526},
  year={2025},
  publisher={Springer}
}

@article{eremenko2015herbert,
  title={{Herbert Stahl's proof of the BMV conjecture}},
  author={Eremenko, Alexandre},
  journal={Sbornik: Mathematics},
  volume={206},
  number={1},
  pages={87--92},
  year={2015},
  publisher={London Mathematical Society, Turpion Ltd and the Russian Academy of Sciences}
}

@article{golden1965lower,
  title={{Lower bounds for the Helmholtz function}},
  author={Golden, Sidney},
  journal={Physical Review},
  volume={137},
  number={4B},
  pages={B1127},
  year={1965},
  publisher={APS}
}

@article{thompson1965inequality,
  title={Inequality with applications in statistical mechanics},
  author={Thompson, Colin J},
  journal={Journal of Mathematical Physics},
  volume={6},
  number={11},
  pages={1812--1813},
  year={1965},
  publisher={American Institute of Physics}
}

@article{lenard1971generalization,
  title={{Generalization of the Golden--Thompson Inequality $\mathrm{Tr}(e^A e^B) \geqq \mathrm{Tr} e^{A+B}$}},
  author={Lenard, A and Halmos, PR},
  journal={Indiana University Mathematics Journal},
  volume={21},
  number={5},
  pages={457--467},
  year={1971},
  publisher={JSTOR}
}

@article{araki1973golden,
  title   = {{Golden-Thompson and Peierls-Bogolubov Inequalities for a General von Neumann Algebra}},
  author  = {Araki, Huzihiro},
  journal = {Communications in Mathematical Physics},
  volume  = {34},
  number  = {3},
  pages   = {167--178},
  year    = {1973}
}

@article{thompson2014golden,
  title={{The Golden-Thompson inequality: Historical aspects and random matrix applications}},
  author={Thompson, Colin J},
  journal={Journal of Mathematical Physics},
  volume={55},
  number={2},
  year={2014}
}

@article{hiai1993golden,
  title={{The Golden-Thompson trace inequality is complemented}},
  author={Hiai, Fumio and Petz, D{\'e}nes},
  journal={Linear Algebra and its Applications},
  volume={181},
  pages={153--185},
  year={1993},
  publisher={Elsevier}
}

@article{ando1994log,
  title={{Log majorization and complementary Golden-Thompson type inequalities}},
  author={Ando, Tsuyoshi and Hiai, Fumio},
  journal={Linear Algebra and its Applications},
  volume={197},
  pages={113--131},
  year={1994},
  publisher={Elsevier}
}

@article{sutter2017multivariate,
  title={Multivariate trace inequalities},
  author={Sutter, David and Berta, Mario and Tomamichel, Marco},
  journal={Communications in Mathematical Physics},
  volume={352},
  number={1},
  pages={37--58},
  year={2017},
  publisher={Springer}
}

@inproceedings{randrianantoanina2024triple,
  title={{Triple operator version of the Golden-Thompson inequality for traces on von Neumann algebras}},
  author={Randrianantoanina, Narcisse},
  booktitle={Annales de l'Institut Fourier},
  volume={74},
  number={1},
  pages={193--233},
  year={2024}
}

@article{so1992equality,
  title={Equality cases in matrix exponential inequalities},
  author={So, Wasin},
  journal={SIAM Journal on Matrix Analysis and Applications},
  volume={13},
  number={4},
  pages={1154--1158},
  year={1992},
  publisher={SIAM}
}

@article{cohen1988spectral,
  title={Spectral inequalities for matrix exponentials},
  author={Cohen, Joel E},
  journal={Linear Algebra and its Applications},
  volume={111},
  pages={25--28},
  year={1988},
  publisher={Elsevier}
}

@article{cohen1982eigenvalue,
  title={Eigenvalue inequalities for products of matrix exponentials},
  author={Cohen, Joel E and Friedland, Shmuel and Kato, Tosio and Kelly, Frank P},
  journal={Linear Algebra and its Applications},
  volume={45},
  pages={55--95},
  year={1982},
  publisher={Elsevier}
}

@article{friedland1994product,
  title={On the product of matrix exponentials},
  author={Friedland, Shmuel and So, Wasin},
  journal={Linear Algebra and its Applications},
  volume={196},
  pages={193--205},
  year={1994},
  publisher={Elsevier}
}

@article{petz1994survey,
  title={A survey of certain trace inequalities},
  author={Petz, D{\'e}nes},
  journal={Banach Center Publications},
  volume={30},
  number={1},
  pages={287--298},
  year={1994}
}

@article{lieb1973convex,
  title={{Convex trace functions and the Wigner-Yanase-Dyson conjecture}},
  author={Lieb, Elliott H},
  journal={Les rencontres physiciens-math{\'e}maticiens de Strasbourg-RCP25},
  volume={19},
  pages={0--35},
  year={1973}
}

@article{johnson2002eigenvalues,
  title={Eigenvalues of words in two positive definite letters},
  author={Johnson, Charles R and Hillar, Christopher J},
  journal={SIAM Journal on Matrix Analysis and Applications},
  volume={23},
  number={4},
  pages={916--928},
  year={2002},
  publisher={SIAM}
}

@article{hillar2005positive,
  title={Positive eigenvalues and two-letter generalized words},
  author={Hillar, Christopher and Johnson, Charles R and Spitkovsky, Ilya M},
  journal={arXiv preprint math/0504573},
  year={2005}
}

\appendix

\section{Exact formula for the case $n=m=5$}

For completeness, we provide an explicit derivation of the formula for the case $n=m=5$.
Let
\begin{equation}
L(x):=\Tr(A_x^5B_x^5),
\qquad
R(x):=p_{5,5}(A_x,B_x),
\label{eq:LR}
\end{equation}
where, by definition,
\begin{equation}
\binom{10}{5}R(x)=[t^5s^5]\,\Tr(tA_x+sB_x)^{10}.
\label{eq:Rdef}
\end{equation}

\begin{proposition}
For every $x\ge 0$, the matrices $A_x$ and $B_x$ are positive semidefinite and
\begin{equation}
L(x)=32x^5+256x^{10},
\label{eq:Lformula}
\end{equation}
\begin{align}
R(x)=\frac{x^4}{126}\bigl(5&+1422x+1675x^2+3130x^3
+4875x^4+5930x^5+4881x^6\bigr).
\label{eq:Rformula}
\end{align}
Consequently,
\begin{align}
L(x)-R(x)=\frac{5x^4}{126}\bigl(5475x^6&-1186x^5-975x^4-626x^3
-335x^2+522x-1\bigr).
\label{eq:gapformula}
\end{align}
In particular, $L(10^{-3})<R(10^{-3})$, so the first inequality in Eq.~\eqref{eq:refinement} fails for $(A_{10^{-3}},B_{10^{-3}})$.
\end{proposition}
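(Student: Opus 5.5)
Positive semidefiniteness is immediate from Observation~\ref{obs:normal-form}: for $x\ge 0$ we have $A_x=P+\eps U$ and $B_x=\eps V+Q$, which are nonnegative combinations of orthogonal projections. Eq.~\eqref{eq:Lformula} is the case $n=m=5$ of Proposition~\ref{prop:cluster-exact}. So all of the work is in computing $R(x)$ exactly. Once that is done, Eq.~\eqref{eq:gapformula} is polynomial subtraction, and the final claim reduces to a sign check.

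To compute $R(x)$, I will use the $\eps$-expansion of Proposition~\ref{prop:admissible-expansion} and organize the $252$ linear words by cyclic class. Each word trace is a polynomial in $\eps$ of degree at most $10$, with nonnegative coefficients. It is a sum over admissible walks $(\sigma_1,\tau_1,\dots,\sigma_r,\tau_r)$ on the path $P-V-U-Q$, and each walk contributes $\eps^{\mathrm{wt}}$ times the product of its overlaps.

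The overlap product is easy to write down. Each $P\!-\!V$ or $U\!-\!Q$ adjacency contributes $1/\sqrt2$ and each $V\!-\!U$ adjacency contributes $1/2$. Equal consecutive projections never occur, since the letters alternate between runs. Hence a closed walk with $k$ total steps, of which $j$ cross the middle edge $V\!-\!U$, has weight $2^{-k/2}\,2^{-j}$.

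Rather than enumerate by hand, I would use a transfer-matrix formulation. Working with the coefficient extraction Eq.~\eqref{eq:Rdef}, write
\[
tA_x+sB_x=tP+t\eps U+s\eps V+sQ.
\]
Then $\tr(tA_x+sB_x)^{10}$ is a sum over closed walks of length $10$ on the four projections $\{P,U,V,Q\}$, now allowing repeated projections (with overlap $1$).

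Let $G$ be the $4\times 4$ Gram matrix of $p,u,v,q$, and let
\[
D=\mathrm{diag}(t,\,t\eps,\,s\eps,\,s).
\]
Because every factor is rank one, $\tr(tA_x+sB_x)^{10}=\tr\bigl((DG)^{10}\bigr)$, where $G$ has unit diagonal and the overlaps of Observation~\ref{obs:normal-form} off the diagonal. I would extract $[t^5s^5]$ of this trace. This can be done by computer algebra. Alternatively, it can be done by hand using the fact that the path structure makes $DG$ tridiagonal after reordering to $P,V,U,Q$, so the trace of its tenth power is a sum over closed paths of length $10$ on a four-vertex path graph with loops. These can be counted by excursion type: how many times the walk crosses each edge, with loop insertions distributed as compositions.

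Dividing by $252$ and substituting $\eps=2x$ should then give Eq.~\eqref{eq:Rformula}. Two consistency checks are available. The $x^4$ coefficient must equal $5/126$ by Corollary~\ref{cor:average}. And at $x=1/2$ one has $\eps=1$, so $A_x,B_x$ are explicit rational matrices and $R(1/2)$ can be computed directly.

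For the final inequality, $L(x)-R(x)$ has the sign of
\[
g(x)=5475x^6-1186x^5-975x^4-626x^3-335x^2+522x-1.
\]
At $x=10^{-3}$, the terms $522x-1=-0.478$ dominate and all higher terms total less than $10^{-3}$ in absolute value. Hence $g(10^{-3})<0$, giving $L<R$.

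The main obstacle is the exact bookkeeping of the coefficients in Eq.~\eqref{eq:Rformula} beyond the leading order. I would rely on the transfer-matrix trace computed symbolically, and use the two checks above as validation.
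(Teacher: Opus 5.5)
Your route is valid, but it is not the paper's. The paper stays in dimension $3$. It computes the characteristic polynomial of $M=tA_x+sB_x$, obtaining $e_1=\alpha(t+s)$, $e_2=\delta(t^2+s^2)+\beta ts$ and $e_3=\gamma ts(t+s)$. It then expands $\tr M^{10}$ by the Newton/Girard--Waring formula into $14$ monomials in $e_1,e_2,e_3$, and extracts $[t^5s^5]$ from each with a closed binomial formula. Every intermediate number is written down, and only a univariate simplification in $x$ remains.

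You instead use the rank-one factorization $tA_x+sB_x=\Xi D\Xi^{*}$, where $\Xi$ has columns $p,u,v,q$ and $\Xi^{*}\Xi=G$. Cyclicity of the trace then gives $\tr(tA_x+sB_x)^{10}=\tr((DG)^{10})$, which is correct. Your positive-semidefiniteness argument, $L(x)$, and the sign check at $x=10^{-3}$ are also fine. Your route gives term-by-term positivity and ties directly to the walk picture of the main text: the $x^4$ term is visibly the ten bridge words. The cost is that $DG$ is a $4\times4$ transfer matrix with bivariate weights. Extracting $[t^5s^5]$ by hand means enumerating closed walks with loop insertions, which is much heavier than the paper's $14$-term computation and in practice needs computer algebra.

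Two caveats.

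First, your overlap formula is wrong. A closed walk with $k$ steps, $j$ of them across $V$--$U$, has weight $(1/\sqrt2)^{k-j}(1/2)^{j}=2^{-(k+j)/2}$, not $2^{-k/2}2^{-j}$. For example, $\tr(PVUQUV)=1/16$, while your formula gives $1/32$. If you do any hand count, read the weights directly off $G$.

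Second, your two consistency checks cannot validate the formula for $R(x)$. The polynomial $252R(x)/x^4$ has seven unknown coefficients, and the $x^4$ coefficient and $R(1/2)$ fix only two of them. So the symbolic computation itself is the proof, not the checks. If you want an independent certificate, argue as follows:
\begin{itemize}
\item $252R(x)$ has degree at most $10$.
\item Its $x^0,\dots,x^3$ coefficients vanish by Proposition~\ref{prop:classification}.
\item Its $x^4$ coefficient is $10$.
\item Its $x^{10}$ coefficient is $2^{10}\sum_{W}4^{-r(W)}=9762$, where $r(W)$ is the number of cyclic run pairs. This follows because the top-order term of each word is $\tr W(2U,2V)=2^{10}\,4^{-r(W)}$.
\end{itemize}
With these, five further evaluations determine $R$. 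That $x^{10}$ value indeed matches $2\cdot 4881$.
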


\begin{proof}
We first note that
\[
C:=\begin{pmatrix}1&-1\\-1&1\end{pmatrix}
\]
has eigenvalues $2$ and $0$, hence $C\succeq 0$. Since
\[
A_x=1\oplus xC,
\qquad
B_x=xC\oplus 1,
\]
we have $A_x,B_x\succeq 0$ for all $x\ge 0$.

Eq.~\eqref{eq:Rformula} is proved below. At $x=10^{-3}$, the factor in parentheses in Eq.~\eqref{eq:gapformula} is negative, so indeed $L(10^{-3})-R(10^{-3})<0$.
\end{proof}

\subsection{Derivation of Eq.~\eqref{eq:Rformula}}

Set
\begin{equation}
M:=tA_x+sB_x=
\begin{pmatrix}
t+sx&-sx&0\\
-sx&x(t+s)&-tx\\
0&-tx&tx+s
\end{pmatrix}.
\label{eq:Mmatrix}
\end{equation}
Then, by definition,
\[
\binom{10}{5}R(x)=[t^5s^5]\,\Tr(M^{10}).
\]
Let
\[
\chi_M(\lambda)=\det(\lambda I-M)=\lambda^3-e_1\lambda^2+e_2\lambda-e_3.
\]
The coefficients are the elementary symmetric polynomials in the eigenvalues of $M$.
From Eq.~\eqref{eq:Mmatrix} we compute
\[
e_1=\Tr(M)=(2x+1)(t+s).
\]
Next, $e_2$ is the sum of the principal $2\times 2$ minors. These are
\begin{align*}
\det\begin{pmatrix}t+sx&-sx\\-sx&x(t+s)\end{pmatrix} & = xt^2+x(x+1)ts,\\
\det\begin{pmatrix}t+sx&0\\0&tx+s\end{pmatrix} & = xt^2+(x^2+1)ts+xs^2,\\
\det\begin{pmatrix}x(t+s)&-tx\\-tx&tx+s\end{pmatrix} & = xs^2+x(x+1)ts.
\end{align*}
Summing them gives
\[
e_2=2x(t^2+s^2)+(3x^2+2x+1)ts.
\]
Finally,
\[
e_3=\det(M)=x(x+1)ts(t+s).
\]
It is convenient to introduce
\begin{align}
\alpha&:=2x+1,
&\beta&:=3x^2+2x+1,\nonumber\\
\delta&:=2x,
&\gamma&:=x(x+1),
\label{eq:alphabetadeltagamma}
\end{align}
and also
\begin{equation}
u:=t+s,
\qquad
v:=t^2+s^2,
\qquad
w:=ts.
\label{eq:uvw}
\end{equation}
Then
\begin{equation}
e_1=\alpha u,
\qquad
e_2=\delta v+\beta w,
\qquad
e_3=\gamma u w.
\label{eq:e123uvw}
\end{equation}
Let $a_n:=\Tr(M^n)$. The Newton identities for a $3\times 3$ matrix give the degree-$10$ power sum in terms of $e_1,e_2,e_3$:
\begin{align}
a_{10} = e_1^{10}&-10e_1^8e_2+10e_1^7e_3+35e_1^6e_2^2-60e_1^5e_2e_3\nonumber\\
&-50e_1^4e_2^3+25e_1^4e_3^2+100e_1^3e_2^2e_3+25e_1^2e_2^4\nonumber\\
&-60e_1^2e_2e_3^2-40e_1e_2^3e_3+10e_1e_3^3-2e_2^5+15e_2^2e_3^2.
\label{eq:newton10}
\end{align}
Thus the problem reduces to extracting the coefficient of $t^5s^5$ from the right-hand side of Eq.~\eqref{eq:newton10}.

\begin{observation}
\label{lem:coeffextract}
Let $a,b,c\ge 0$ be integers with $a+2b+2c=10$. Then
\[
[t^5s^5]u^av^bw^c
=\sum_{j=0}^b \binom{b}{j}\binom{a}{5-c-2j},
\]
where $\binom{a}{m}$ is understood to be $0$ when $m\notin\{0,1,\dots,a\}$.
\end{observation}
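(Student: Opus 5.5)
We expand each factor and read off the coefficient of $t^5s^5$ directly. Since $u=t+s$, $v=t^2+s^2$ and $w=ts$, the binomial theorem gives
\[
u^a=\sum_{i=0}^a\binom{a}{i}t^is^{a-i},\qquad
v^b=\sum_{j=0}^b\binom{b}{j}t^{2j}s^{2(b-j)},\qquad
w^c=t^cs^c .
\]
Multiplying these out, a generic term of $u^av^bw^c$ is
\[
\binom{a}{i}\binom{b}{j}\,t^{\,i+2j+c}\,s^{\,a-i+2b-2j+c}.
\]
Its total degree is $a+2b+2c=10$, and this holds automatically by hypothesis. So the exponent of $s$ equals $5$ exactly when the exponent of $t$ equals $5$. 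It is therefore enough to impose the single condition $i+2j+c=5$, that is, $i=5-c-2j$.

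For each $j\in\{0,\dots,b\}$ there is then at most one admissible $i$. It contributes $\binom{b}{j}\binom{a}{5-c-2j}$ when $0\le 5-c-2j\le a$, and contributes nothing otherwise. The convention that $\binom{a}{m}=0$ for $m\notin\{0,\dots,a\}$ handles these absent terms automatically, so no separate case split on the range of $j$ is needed. Summing over $j$ gives the claimed formula.

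There is no real obstacle here. The only point needing care is the check that fixing the $t$-degree also fixes the $s$-degree, and that follows immediately from homogeneity.
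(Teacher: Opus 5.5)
Your proof is correct and follows essentially the same route as the paper: you expand $u^a$, $v^b$, $w^c$ binomially and keep the terms with $i+2j+c=5$. You also make explicit the homogeneity check that the $s$-exponent is then automatically $5$, a step the paper leaves implicit.
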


\begin{proof}
From Eq.~\eqref{eq:uvw},
\begin{align*}
v^b & =(t^2+s^2)^b=\sum_{j=0}^b \binom{b}{j} t^{2j}s^{2(b-j)},\\
u^a & =(t+s)^a=\sum_{r=0}^a \binom{a}{r} t^r s^{a-r}.
\end{align*}
Therefore
\[
u^av^bw^c
=\sum_{j=0}^b\sum_{r=0}^a \binom{b}{j}\binom{a}{r}
\, t^{r+2j+c}s^{a-r+2(b-j)+c}.
\]
The coefficient of $t^5s^5$ is obtained when
\[
r+2j+c=5,
\]
that is, when $r=5-c-2j$. Substituting this value gives the stated formula.
\end{proof}

Applying Observation~\ref{lem:coeffextract} to the monomials that occur in Eq.~\eqref{eq:newton10}, we obtain the following identities:
\begin{align}
[t^5s^5]u^{10}&=252,
\label{eq:coeff1}\\
[t^5s^5]u^8(\delta v+\beta w)&=112\delta+70\beta,
\label{eq:coeff2}\\
[t^5s^5]u^8\gamma w&=70\gamma,
\label{eq:coeff3}\\
[t^5s^5]u^6(\delta v+\beta w)^2&=52\delta^2+60\beta\delta+20\beta^2,
\label{eq:coeff4}\\
[t^5s^5]u^6(\delta v+\beta w)\gamma w&=30\delta\gamma+20\beta\gamma,
\label{eq:coeff5}\\
[t^5s^5]u^4(\delta v+\beta w)^3&=24\delta^3+42\beta\delta^2+24\beta^2\delta+6\beta^3,
\label{eq:coeff6}\\
[t^5s^5]u^6\gamma^2w^2&=20\gamma^2.
\label{eq:coeff7}
\end{align}
A second application gives
\begin{align}
[t^5s^5]u^4(\delta v+\beta w)^2\gamma w&=14\delta^2\gamma+16\beta\delta\gamma+6\beta^2\gamma,
\label{eq:coeff8}\\
[t^5s^5]u^2(\delta v+\beta w)^4&=12\delta^4+24\beta\delta^3+24\beta^2\delta^2+8\beta^3\delta+2\beta^4,
\label{eq:coeff9}\\
[t^5s^5]u^4(\delta v+\beta w)\gamma^2w^2&=8\delta\gamma^2+6\beta\gamma^2,
\label{eq:coeff10}\\
[t^5s^5]u^2(\delta v+\beta w)^3\gamma w&=6\delta^3\gamma+12\beta\delta^2\gamma+6\beta^2\delta\gamma+2\beta^3\gamma,
\label{eq:coeff11}\\
[t^5s^5]u^4\gamma^3w^3&=6\gamma^3,
\label{eq:coeff12}\\
[t^5s^5](\delta v+\beta w)^5&=30\beta\delta^4+20\beta^3\delta^2+\beta^5,
\label{eq:coeff13}\\
[t^5s^5]u^2(\delta v+\beta w)^2\gamma^2w^2&=4\delta^2\gamma^2+4\beta\delta\gamma^2+2\beta^2\gamma^2.
\label{eq:coeff14}
\end{align}

\clearpage
\noindent Substituting Eqs.~\eqref{eq:e123uvw} and \eqref{eq:coeff1}--\eqref{eq:coeff14} into Eq.~\eqref{eq:newton10} yields
\begin{align}
[t^5s^5]a_{10}&=252\alpha^{10}-10\alpha^8(112\delta+70\beta)\nonumber\\
&+10\alpha^7(70\gamma)
+35\alpha^6(52\delta^2+60\beta\delta+20\beta^2) \notag\\
& -60\alpha^5(30\delta\gamma+20\beta\gamma)\nonumber\\
&-50\alpha^4(24\delta^3+42\beta\delta^2+24\beta^2\delta+6\beta^3)\nonumber\\
&+25\alpha^4(20\gamma^2)+100\alpha^3(14\delta^2\gamma+16\beta\delta\gamma+6\beta^2\gamma)\nonumber\\
&+25\alpha^2(12\delta^4+24\beta\delta^3+24\beta^2\delta^2+8\beta^3\delta+2\beta^4) \notag\\
& -60\alpha^2(8\delta\gamma^2+6\beta\gamma^2)\nonumber\\
&-40\alpha(6\delta^3\gamma+12\beta\delta^2\gamma+6\beta^2\delta\gamma+2\beta^3\gamma) \notag\\
& +10\alpha(6\gamma^3)-2(30\beta\delta^4+20\beta^3\delta^2+\beta^5)\nonumber\\
&+15(4\delta^2\gamma^2+4\beta\delta\gamma^2+2\beta^2\gamma^2).
\label{eq:longcoefficient}
\end{align}
Now substitute the values of $\alpha,\beta,\delta,\gamma$ from Eq.~\eqref{eq:alphabetadeltagamma}. A direct simplification gives
\begin{align}
[t^5s^5]a_{10}=2x^4\bigl(4881x^6&+5930x^5+4875x^4+3130x^3+1675x^2+1422x+5\bigr).
\label{eq:coefffinal}
\end{align}
Since $\binom{10}{5}=252$, Eqs.~\eqref{eq:Rdef} and \eqref{eq:coefffinal} imply
\begin{align*}
R(x)&=\frac{1}{252}[t^5s^5]a_{10}\\
&=\frac{x^4}{126}\bigl(5+1422x+1675x^2+3130x^3+4875x^4+5930x^5+4881x^6\bigr).
\end{align*}

\end{document}